\newtheorem{corollary}{Corollary}
\newtheorem{theorem}{Theorem}
\newtheorem{example}{Example}
\newtheorem{remark}{Remark}
\newenvironment{proof}[1][Proof]{\textbf{#1.} }{\ \rule{0.5em}{0.5em}}
\newcommand{\sdwell}{{\mathcal S_\text{min}}}
\newcommand{\saverage}{{\mathcal S_\text{ave}}}
\newcommand{\sddwell}{{\mathcal S_{\mathcal G,\text{min}}}}
\newcommand{\sdaverage}{{\mathcal S_{\mathcal G,\text{ave}}}}
\begin{document}



\title{GRAPH-BASED DWELL TIME COMPUTATION METHODS FOR DISCRETE-TIME SWITCHED LINEAR SYSTEMS}

\author{F.~\.Ilhan and \"O.~Karabacak}




\date{12 September 2015}

\maketitle

\begin{abstract}
Analytical computation methods are proposed for {evaluating} the minimum dwell time and average dwell time {guaranteeing the asymptotic stability} of a discrete-time switched linear system whose switchings are assumed to respect a given directed graph. The minimum and average dwell time  can be found using {the graph that governs the switchings, and the associated weights}. This approach, which is used in a previous work for continuous-time systems having non-defective subsystems, has been adapted to discrete-time switched systems and generalized to allow defective subsystems. Moreover, we present a method to improve the dwell time estimation in the case of bimodal switched systems. In this method, scaling algorithms to minimize the condition number are used to give better minimum dwell time and average dwell time estimates.

\textbf{Keywords:} Switched systems, minimum dwell time, average dwell time, optimum cycle ratio, asymptotic stability, switching graph.
\end{abstract}

\section{Introduction}
Problems {regarding} the stability of switched linear systems have been attracted {the interest of}  many researchers in the last two decades \cite{Hespanha_itac_04}. A well-known approach to the stability of switched linear systems is to impose constraints on the set of switching signals so as to guarantee the stability \cite{Morse_itac_96,Geromel_sjco_06,Guo_2012,Cai_2012}. Mainly, 
two kinds of constraints {have been} considered: minimum dwell time constraint and average dwell time constraint. {For the former}, intervals between two consecutive switchings are assumed to be larger than or equal to a number called minimum dwell time, whereas {for the latter}, these intervals are assumed to be{, on average,} larger than or equal to a number called average dwell time. 

In {the} literature, there are many methods to find {the} minimum dwell time and/or {the} average dwell time{ that guarantee stability of} a given switched system  \cite{Shorten2007}. However, in general, there is no method that gives the smallest possible minimum (or average) dwell time in terms of subsystem properties. The most efficient methods that can be used to approximate the minimum (or average) dwell time are based on linear matrix inequalities \cite{Geromel_sjco_06,Chesi2012} and therefore, {they} do not give any insight in the relationship between the subsystem properties and the minimum (or average) dwell time \cite{Haimovich2011}.

{For continuous-time switched systems with non-defective subsystem matrices, an estimate of the} minimum (or average) dwell time explicitly depending on the subsystem properties has been {derived} in \cite{Karabacak_ijss_09,karabacak_scl_13}, where {the} minimum dwell time is found as a function of {the} subsystem eigenvalues and eigenvectors. {The method in \cite{Karabacak_ijss_09} is based on viewing a switching signal as a walk in the complete directed graph (digraph) whose nodes correspond to subsystems and whose arcs correspond to switching events.} 

{In \cite{karabacak_scl_13}, a more general problem has been considered, namely finding the minimum dwell time for switched systems whose switchings are governed by a digraph (called switching graph). Note that the special case of this problem where the switching graph is complete corresponds to the standard dwell time problem in the literature. On the other hand, the general problem is important because} switched systems {whose} switchings are governed by a digraph can be {encountered} in control engineering applications \cite{Hou_jqcd_10,Xu_12,Zhang_12}. Theoretically, such systems are first considered in the
switched system literature, {to the best of our knowledge}, in \cite{Mancilla_05},
where stability conditions are reduced to the conditions on
{the} strongly connected components of the digraph.  {Later on}, the second author of this paper presents sufficient conditions {for} the stability of constrained switched systems using the properties of the digraph that governs the switchings \cite{karabacak_scl_13}. This method has been improved in {a conference paper} \cite{ilhan_14} {for continuous-time systems} so as to cover systems with defective subsystem matrices. Recently, in \cite{Kundu_14}, stabilization of switched systems {whose} switchings are governed by a digraph has been considered. Digraphs are also{ used in \cite{jungers}, where the multiple Lyapunov function method has been generalized}.

%
%

In this paper, we apply the method {proposed} in \cite{karabacak_scl_13} to discrete-time switched linear systems and waive the non-defectiveness condition by considering the Jordan form of subsystem matrices. In addition, we improve the minimum dwell time estimate for bimodal systems by applying a scaling algorithm that minimizes {the} condition number of the matrices. 

In the sequel, we first explain how the switching graph arises naturally considering a bound on the norm of {the solution of a linear switched system}. In Section~\ref{sec:switchinggraph}, we define the switching graph which will be used to estimate {the} minimum (or average) dwell time. Based on the switching graph, methods for minimum dwell time and average dwell time computation are given in Section~\ref{sec:dwell} and Section~\ref{sec:averagedwell}, respectively. Finally, we will discuss on possible future research in this area in Section~\ref{sec:conc}.

{\textbf{Notation:} $\mathbb R$ and $\mathbb N$ denote the set of real numbers and the set of positive integers, respectively. $\|\cdot\|$ denotes the 2-norm for vectors and the spectral norm for matrices.}

\section{Switched Systems and the Switching Graph}
\label{sec:switchinggraph}

We consider discrete-time switched linear systems of the form
\begin{equation}
\label{eq:basic_form}
x(t+1)=A_{\sigma(t)}x(t),\quad 
\sigma \in \mathcal S ,\quad t \in \mathbb{N},
\end{equation}
where $x\in\mathbb R^N$ is the state of the system, $\{A_i \in \mathbb{R}^{N\times N} \}_{i \in \{1,\dots,M\}}$ is a finite set of Schur stable subsystem matrices, $M$ is the number of subsystems, $N$ {is the dimension of the state space}, $\{1,\dots,M\}$ is the index set for {the} subsystems and $\mathcal S$ denotes the set of admissible switching signals. {Assume as} switching instants $0=t_0<t_1<\dots$, where $t_k \in \mathbb N$ and denote {the index of} the active subsystem{ from time} $t_k$ {to time} $t_{k+1}-1$ by $\sigma_k$. Let $N_\sigma(t)$ denote the number of switchings {in the time interval  $[0,t)$}. Two different sets of switching signals are considered:
\begin{equation}
\label{eq:dwellset}
\sdwell[\tau]=\{\sigma | t_k-t_{k-1}\geq\tau\}
\end{equation}
\begin{equation}
\label{eq:avedwellset}
\saverage[\tau,N_0]=\{\sigma | N_\sigma(t)\leq N_0+\frac{t}{\tau}\}
\end{equation}
$\sdwell[\tau]$ consists of the switching signals {for which} the {interval between two consecutive switchings is} always larger than $\tau$, whereas $\saverage[\tau,N_0]$ is the set of switching signals with the property that at each time $t$ the number of past switchings $N_\sigma(t)$ satisfies the average dwell time condition $N_\sigma(t)\leq N_0+\frac{t}{\tau}$.

For a given initial condition $x(0)$, the solution of the discrete-time switched linear system (\ref{eq:basic_form}){ for }$t \in \{t_n,\dots,t_{n+1}-1\}$ can be written as 
\begin{equation}
\label{eq:main_sol}
x(t)=A_{\sigma_{n+1}}^{(t-t_n)} \left(\prod_{k=1}^{n} A_{\sigma_{k}}^{(t_k-t_{k-1})}\right)x(0).
\end{equation}

Let us consider the Jordan matrix decomposition $A^{}_i = P^{}_iJ^{}_iP_i^{-1}$ where $P_i$ is the generalized eigenvector matrix of $A_i$ and $J_i$ is the Jordan matrix. We use the fact that the $1$'s above the diagonal in the Jordan matrix are conventional and can be replaced by any constant $\epsilon$. To explain this, let us assume that the Jordan form $J$ consists of one 
Jordan block, matrices with a Jordan form that consists of more than one Jordan block can be treated similarly. 
Then, {by changing} the
generalized eigenvector matrix from $P=[p_0|p_1|p_2|\dots]$ to $%
P^{(\epsilon)}=[p_0| \epsilon p_1 | \epsilon^2 p_2 |\dots]$, the matrix $A$ can be written as 
\begin{equation}  
\label{eq:genJordan}
A=P^{(\epsilon)} \cdot J^{(\epsilon)} \cdot {P^{(\epsilon)}}^{-1},
\end{equation}
where $J^{(\epsilon)}$ is in Jordan form with $\epsilon$'s in place of $1$'s. It is known that $\|J^{(\epsilon)}\|\leq\|D\|+\|N^{(\epsilon)}\|$ where $\|D\|$ denotes the diagonal part of $J^{(\epsilon)}$ and $N^{(\epsilon)}$ denotes {the} nilpotent part of $J^{(\epsilon)}$. Then, $\|N^{(\epsilon)}\|=\epsilon$.{ For a Schur stable matrix} $A$, $\|D\|<1$,{ and we can choose an } $\epsilon$ such that $0<\epsilon<1-\|D\|$. {Then, one gets}  $\|J^{(\epsilon)}\|<1$. {For a matrix whose Jordan form consists of more than one Jordan blocks $J_1,J_2,\dots$, let us choose a sufficiently small $\epsilon$ satisfying 
\begin{equation}
\label{eq:1}
\epsilon<1-\|D_k\|
\end{equation}
for each $k$, where $D_k$ is the diagonal part of the Jordan block $J_k$. Since $\|J^{(\epsilon)}\|=\max_k \|J^{(\epsilon)}_k\|$, we obtain 
\begin{equation}
 \label{ineq:kucukbirden}
 \|J^{(\epsilon)}\|<1.
\end{equation}
 Moreover, since $\|D\|=\max_k \|D_k\|$, the condition (\ref{eq:1}) is equivalent to}
 \begin{equation}
  \label{eq:2}
  \epsilon<1-\|D\|.
 \end{equation}
 
{For each subsystem matrix} $A_i$, {we choose a generalized Jordan decomposition} $A^{}_i=P^{(\epsilon_i)}_i J^{(\epsilon_i)}_i {P^{(\epsilon_i)}_i}^{-1}$ with a sufficiently small $\epsilon_i$ satisfying 
\begin{equation}
 \label{eq:3}
\epsilon_i<1-\|D_i\|,
\end{equation}
 where $D_i$ is the diagonal part of $J_i$. Then, we have $
 \|J_i^{(\epsilon_i)}\|<1.$  {In the following we drop the superscript $(\epsilon)$ from matrices $P^{(\epsilon)}$ and $J^{(\epsilon)}$ for the simplicity of notation, unless it is necessary to indicate the dependence on $\epsilon$. Note that for a non-defective subsystem matrix $A_i$, the Jordan form is diagonal and the Schur stability implies that $\|J_i\|=\|D_i\|<1$.}
 
Substituting $A^{}_i=P^{}_i J^{}_i {P^{}_i}^{-1}$ in (\ref{eq:main_sol}), and using norm inequalities we have
\begin{multline}
\|x(t)\|=\left\|A_{\sigma_{n+1}}^{t-t_n} \left(\prod_{k=1}^{n} A_{\sigma_{k}}^{(t_k-t_{k-1})}\right)x(0)\right\| \\
=\left\|   P_{\sigma_{n+1}}^{}J_{\sigma_{n+1}}^{(t-t_n)}  \left(\prod_{k=1}^{n}P_{\sigma_{k+1}}^{-1}P_{\sigma_k}^{} J_{\sigma_{k}}^{(t_k-t_{k-1})}\right)P_{\sigma_1}^{-1}x(0)            \right\|\\
\label{eq:same_as_jordan}
\leq \|P_{\sigma_{n+1}}^{}\|\|P_{\sigma_{1}}^{-1}\| \|J_{\sigma_{n+1}}^{}\|^{(t-t_n)}\cdot \\ \left( \prod_{k=1}^{n} \|P_{\sigma_{k+1}}^{-1}P_{\sigma_k}^{}\|\|J_{\sigma_k}^{}\|^{(t_k-t_{k-1})}   \right)\cdot \|x(0)\|,
\end{multline}

Let us define $\rho_i^{(\epsilon_i)}:=\|J_i^{(\epsilon_i)}\|$. {Note that, if $A_i$ is non-defective, then $J_i$ is diagonal, and hence $\rho_i$ is equal to the spectral radius of $A_i$.} Then, writing the term{ between parentheses in Inequality (\ref{eq:same_as_jordan})} in exponential form, we have
\begin{multline}
\|x(t)\|\leq\|P_{\sigma_{n+1}}^{}\|\|P_{\sigma_{1}}^{-1}\| \rho_{\sigma_{n+1}}^{(t-t_n)} \cdot \\ e^{\sum_{k=1}^{n}\left(  \ln\|P_{\sigma_{k+1}}^{-1}P_{\sigma_k}^{}\|+(t_k-t_{k-1})\ln\rho_{\sigma_k}   \right)} \|x(0)\|\\
\leq\gamma \rho_{\sigma_{n+1}}^{(t-t_n)} e^{ \sum_{k=1}^{n}\left( \ln\|P_{\sigma_{k+1}}^{-1}P_{\sigma_k}^{}\|+(t_k-t_{k-1})\ln\rho_{\sigma_k}   \right)} \|x(0)\|,
\label{eq:average_git}
\end{multline}
where $\gamma=\max_{i,j \in \{1,\dots,M\}} \|P_i^{}\|\|P_j^{-1}\| $. Since each subsystem is Schur stable, from (\ref{ineq:kucukbirden}) we have
\begin{eqnarray}
 \label{eq:lnrho}
\ln\rho_i=\ln\|J_i\|<0,
\end{eqnarray}
for all $i$. Then, using   $\rho_{\sigma_{n+1}}^{(t-t_n)}<1$ and $\tau\leq t_k-t_{k-1}$ for $\sigma\in\sdwell[\tau]$, we can write
\begin{multline}
\|x(t)\| \leq\gamma^{} e^{ \sum_{k=1}^{n}\left( \ln\|P_{\sigma_{k+1}}^{-1}P_{\sigma_k}^{}\|+\tau\ln\rho_{\sigma_k}   \right)} \|x(0)\|
\\
=\gamma e^{\alpha(n)}\|x(0)\|,
\end{multline}
 where 
 \begin{equation}
 \label{eq:alpha_function}
 \alpha(n)=\sum_{k=1}^{n} \ln\|P_{\sigma_{k+1}}^{-1}P_{\sigma_k}^{}\|+\tau\ln\rho_{\sigma_k}.
 \end{equation}
In the following, we show that the function $\alpha(n)$ can be seen as the weight of a walk (of length $n$) on a doubly weighted digraph called the switching graph. 

\subsection{Definition of a switching graph}
For some switched systems, transitions between subsystems are restricted by a digraph,{ whose } nodes represent subsystems and {whose} directed edges represent admissible transitions between subsystems. As a consequence of this idea, a switching signal $\sigma$ can be viewed as a walk on such a graph. It can be easily seen that each transition from subsystem $i$ to subsystem $j$ {gives a contribution to} $\alpha$ function which is a function of the ordered pair $(i,j)$, namely $\ln\|{(P^{(\epsilon_j)}_{j})}^{-1}P^{(\epsilon_i)}_{i}\|+\tau\ln\rho_{i}^{\epsilon_i}$. These values can be assigned as {the} weights of the directed edges for all admissible transitions between subsystems. Consider $\omega_{ij}^+=\ln\|{(P_{j}^{(\epsilon_j)})}^{-1}P_i^{(\epsilon_i)}\|$ as the gain of the transition from subsystem $i$ to subsystem $j$  and $\omega_{ij}^-=-\ln\rho_{i}^{(\epsilon_i)} $ as{ the loss of the dwelling} at subsystem $i$ {per unit time}. Then, we have a doubly weighted graph as follows.

A \textit{switching graph} of a switched linear system (\ref{eq:basic_form}) is a doubly weighted digraph 
\begin{equation}
\mathcal G = \{ \mathcal V, \mathcal E, \omega^+, \omega^-\} .
\end{equation}
Here, $\mathcal V$ is the set of nodes which is isomorphic to the index set of subsystems. $\mathcal E$ is the set of the directed edges that represent admissible transitions between subsystems. This set is given by  $\mathcal E = \{(i,j) | i \neq j, i, j \in \mathcal V\}$ in the case of no restriction imposed on transitions between subsystems, namely we have a fully connected switching graph. $\omega^+$ and $\omega^-$ {are the real-valued weight functions defined on the set} $\mathcal E$ as
\begin{equation}
\label{eq:omega_plus}
\omega_{ij}^+=\ln\|{(P_{j}^{(\epsilon_j)})}^{-1}P_i^{(\epsilon_i)}\|\end{equation}
\begin{equation}
\label{eq:omega_minus}
\omega_{ij}^-=-\ln\rho_{i}^{(\epsilon_i)} \end{equation}

For a switched system with four subsystems, the switching graph is shown in Figure~\ref{fig:fully}.
\begin{figure}
\centering
\includegraphics[width=0.4\textwidth]{./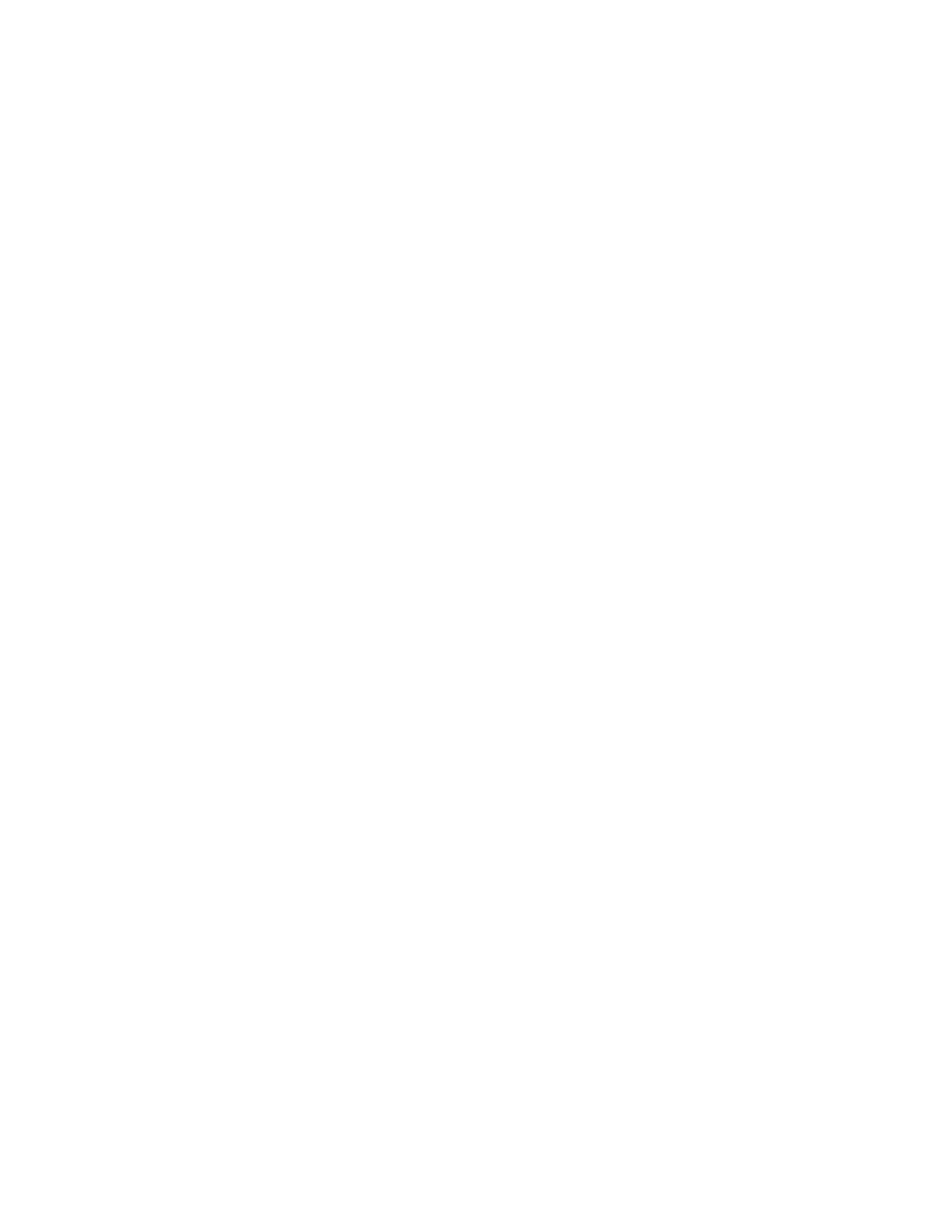}
\caption{The switching graph ($\mathcal G_1$) of a switched system with four subsystems. On each edge, $\omega^+$ and $\omega^-$ are shown, respectively.}
\label{fig:fully}
\end{figure}
%

Assume that the switchings in the switched system under consideration should respect a directed graph. In this case, the set of admissible switching signals $\mathcal S$ is restricted by the directed edges of the switching graph. We denote {by}
\begin{equation}
\sddwell[\tau]=\{\sigma \in \sdwell[\tau] | (\sigma_k, \sigma_{k+1}) \in \mathcal E \}
\end{equation}
{the set} $\sdwell[\tau]$ {restricted by the switching graph} $\mathcal G$.
This set of switching signals contains the switching signals that respect the given digraph $\mathcal G$ {and satisfies} the minimum dwell time property. Similarly, we have
\begin{equation}
\sdaverage[\tau,N_0]=\{\sigma \in \saverage[\tau,N_0] | (\sigma_k, \sigma_{k+1}) \in \mathcal E \}.
\end{equation}
\subsection{Maximum cycle ratio}
Consider a weighted digraph $\mathcal G=\{ \mathcal V, \mathcal E, \omega\}$, where $\omega$ is a real-valued weight function defined on $\mathcal E$. A walk $\mathcal W$ {of length $s$} can be defined as $\mathcal W=\left((p_1,p_2),(p_2,p_3),\dots,(p_s,p_{s+1})\right)$ where $(p_i,p_{i+1})\in \mathcal E$ for all $i=1,\dots,s$. A path is a walk $((p_1,p_2),(p_2,p_3),\dots,(p_s,p_{s+1}))$ where $p_1,\dots,p_s$ are distinct and a cycle is a path with $p _{s+1}=p_1$. 
The weight of a walk $\mathcal W$ is defined as $\omega(\mathcal W)= \sum_{k=1}^{s}\omega{((p_k,p_{k+1}))}$. 

Now consider a doubly weighted graph $\mathcal G=\{\mathcal V, \mathcal E, \omega^+, \omega^-\}$. The ratio of a cycle $C$ {in} $\mathcal G$ is defined as $\nu(C)= \frac{\omega^+(C)}{\omega^-(C)}$. The maximum cycle ratio $\nu $ is defined as 
\begin{eqnarray}
\label{eq:max_cycle_ratio}
\nu (\mathcal G)&=& \max_{C \in \mathcal C}\nu(C)=\max_{C \in \mathcal C}\frac{\omega^+(C)}{\omega^-(C)}\nonumber\\
&=&\max_{C \in \mathcal C}\frac{\sum_{k=1}^{|C|}\omega^+((p_k,p_{k+1}))}{\sum_{k=1}^{|C|}\omega^-((p_k,p_{k+1}))}
\end{eqnarray} 
{where $\mathcal C $ denotes the set of all cycles {in} $\mathcal G$ and $|C|$ is the length of the cycle $C=(p_1,\dots,p_{|C|})$.
}

Similarly, the mean of a cycle is defined as $\mu(C)=\frac{\omega^+ (C)}{|C|}$. The maximum cycle mean $\mu $ is defined as
\begin{equation}
\mu (\mathcal G)=\max_{C \in \mathcal C}\mu(C)=\max_{C \in \mathcal C}\frac{\omega^+ (C)}{|C|}.
\end{equation}

{The} optimum (minimum or maximum) cycle ratio, also known as profit-to-time ratio, and {the} optimum cycle mean have been considered in
graph theory literature \cite{Karp_78,Golitschek_82,Dasdan_98,Dasdan_04}, and 
{have} many applications in different areas such as scheduling problems \cite{bouyer_04,Leus_07,Kats_11} and performance analysis of digital systems \cite{Lu_06}. There are many algorithms that can be used to find the optimum cycle ratio and {the} optimum cycle mean for a given doubly weighted digraph (See \cite{Dasdan_04}). In terms of practical complexity, one of the fastest algorithms is given in
\cite{Dasdan_98}. In the sequel, we use this algorithm for which a C code is available in Ali Dasdan's
personal web page \cite{Dasdan_code}.

\section{Minimum Dwell Time Computation}
\label{sec:dwell}

{In this section, we consider the switched linear system whose switchings are governed by the digraph} $\mathcal G$ {and the time interval between any consecutive switching instants is larger than or equal to the minimum dwell time} $\tau$. {As a special case, the non-defective bimodal case is discussed in Subsection \ref{sub:bimodal_dwell}.}
  
 \begin{theorem}
 \label{th:nondefective_dwell}
 Let $\{A_i\}_{i=1,\dots,M}$ be a family of Schur stable matrices and {let} $\mathcal G$ be a switching graph. 
 Then the switched linear system given by
 \begin{equation}
\label{eq:nondefective_form}
x(t+1)=A_{\sigma(t)}x(t), \quad 
\sigma \in \sddwell[\tau] ,\quad t \in \mathbb{N}
\end{equation}
is asymptotically stable if
 \begin{equation}
 \tau > \nu (\mathcal G),
 \end{equation}
 where the maximum cycle ratio $\nu$ is found using the weights given in (\ref{eq:omega_plus}) and (\ref{eq:omega_minus}) {with parameters} $\epsilon_i$ satisfying (\ref{eq:3}).
 
 \end{theorem} 
 \begin{proof}
   Note that a switching signal $\sigma$ can be represented by a walk $\mathcal W$ {in} the switching graph. If the length of the walk is finite, last subsystem stays active forever {thus guaranteeing} the asymptotic stability of the switched linear system. Hence, we consider walks with infinite length, which represent switching signals having infinitely many switchings. Using Eq. (\ref{eq:alpha_function}), it is seen that $\alpha(n)$ is the weight of the walk
 \begin{equation}
 \mathcal W_n := (\sigma_1, \sigma_2), (\sigma_2, \sigma_3),\dots,(\sigma_n, \sigma_{n+1}) 
 \end{equation}
for the weight function $\omega_{ij}=\omega^+_{ij}-\tau \omega^-_{ij}$. Using the fact that any walk on a digraph with $M$ nodes can be decomposed into {the union of some} cycles and a path of length at most $M-1$, $\alpha(n)$ is decomposed as $\alpha(n)= \alpha_{*}(n)+\alpha_{2}(n)+\alpha_{3}(n)+\dots+\alpha_{M}(n)$. Here $\alpha_{*}(n)$ is the weight of the path and $\alpha_k(n)$ is the sum of the weights of all $k-$cycles. Note that the assumption $\tau > \nu (\mathcal G)$ implies $\omega(C):=\omega^+(C)-\tau\omega^-(C) < 0 $ for any cycle $C$. Namely, weights of all cycles are negative. Since $\mathcal V$ is finite, $\alpha_{ }(n)$ is bounded, and therefore, $\alpha(n) \rightarrow -\infty$ as $n \rightarrow \infty$. This is valid for all $\sigma\in\sddwell[\tau]$. Hence, {the} switched linear system (\ref{eq:nondefective_form}) is asymptotically stable.   
 \end{proof}
 
 \begin{remark}
{  If all subsystem matrices are non-defective, then Theorem \ref{th:nondefective_dwell} reduces to the discrete-time version of Theorem~1 in \cite{karabacak_scl_13}.}
 \end{remark}

\subsection{Bimodal case}
\label{sub:bimodal_dwell}
Theorem \ref{th:nondefective_dwell} can be enhanced for bimodal {non-defective} switched systems, namely switched systems with two {non-defective} subsystems. 
Since there is only one cycle in {the graph associated with} the bimodal case, the maximum cycle ratio is 
\begin{equation}
\label{eq:kappa}
\nu (\mathcal G)=\frac{\ln( \|P_2^{-1}P_1^{}\|\cdot\|P_1^{-1}P_2^{}\|)}{-\ln(\rho_1 \rho_2)}=\frac{\ln( \kappa(P_2^{-1}P_1^{}))}{-\ln(\rho_1\rho_2)},
\end{equation}
where $\kappa$ denotes the condition number for the spectral norm, namely $\kappa(A)= \|A\|\|A^{-1}\|.$
Therefore, using Theorem~\ref{th:nondefective_dwell}, the bimodal switched system is stable if
\begin{equation}
\label{eq:bimodal_tau}
\tau > \frac{\ln( \kappa(P_2^{-1}P_1^{}))}{-\ln(\rho_1\rho_2)}.
\end{equation}
It is known that eigenvectors can be scaled by any nonzero scalar.
Then, an eigenvector matrix multiplied {on the} right by a nonsingular diagonal matrix is also an eigenvector matrix.  Let $\mathcal D$ denote the set of nonsingular diagonal matrices. Consider the eigenvector matrices $P_1$, $P_2$. Let $\bar P_1$, $\bar P_2$ be the new eigenvector matrices obtained {by} scaling columns of $P_1$, $P_2$ using $D_1$, $D_2 \in \mathcal D$, respectively. Then, we have
\begin{equation}
\bar P_2^{-1} \bar P_1^{}=D_2^{-1}P_2^{-1}P_1^{}D_1^{}.
\end{equation}
Hence, the condition (\ref{eq:bimodal_tau}) in Theorem~\ref{th:nondefective_dwell} can be replaced by a stronger condition
\begin{equation}
\tau > \frac{\ln(\min_{D_L, D_R \in \mathcal D} \kappa(D_L^{}P_2^{-1}P_1^{}D_R^{}))}{-\ln(\rho_1\rho_2)}.
\end{equation}

There is no analytical method for minimizing the condition number for the spectral norm by scaling rows and columns, but algorithmic methods are available \cite{Morari_braatz,Liu_13}. 

\begin{corollary}
 \label{cor:bimodal1}
 The switched linear system (\ref{eq:nondefective_form}) with two Schur stable subsystems is asymptotically stable if
 \begin{equation}
\tau > \frac{\ln(\min_{D_L, D_R \in \mathcal D} \kappa(D_L^{}P_2^{-1}P_1^{}D_R^{}))}{-\ln(\rho_1\rho_2 )}.
 \end{equation}
 \end{corollary}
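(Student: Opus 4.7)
The plan is to reduce the corollary to Theorem~\ref{th:nondefective_dwell} by exploiting the gauge freedom in choosing eigenvector matrices. First, I would specialize Theorem~\ref{th:nondefective_dwell} to $M=2$. In the bimodal case, the switching graph has only two nodes and only one simple cycle, namely $(1,2),(2,1)$. Summing the edge weights on this cycle gives
\begin{equation*}
\omega^+(C)=\ln\|V_2^{-1}V_1\|+\ln\|V_1^{-1}V_2\|=\ln\kappa(V_2^{-1}V_1),\qquad \omega^-(C)=-\ln(\rho_1\rho_2),
\end{equation*}
where I have used $\kappa(A)=\|A\|\|A^{-1}\|$ together with $\|(V_2^{-1}V_1)^{-1}\|=\|V_1^{-1}V_2\|$. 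Hence $\nu(\mathcal{G})$ reduces exactly to Eq.~(\ref{eq:kappa}), and Theorem~\ref{th:nondefective_dwell} already yields stability under $\tau>\ln\kappa(V_2^{-1}V_1)/(-\ln\rho_1\rho_2)$.

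Next, I would exploit the fact that eigenvectors are only determined up to a nonzero scalar, so for any $D_1,D_2\in\mathcal{D}$ the matrices $\bar V_i=V_i D_i$ are also legitimate eigenvector matrices of $A_i$. Substituting these $\bar V_i$ into the switching graph construction of Section~\ref{sec:switchinggraph}, the weights $\omega^+_{ij}$ change but $\omega^-_{ij}$ does not (since the spectral radius $\rho_i$ is intrinsic to $A_i$). A direct computation, already carried out in the text above the corollary, gives
\begin{equation*}
\bar V_2^{-1}\bar V_1=D_2^{-1}V_2^{-1}V_1 D_1,
\end{equation*}
and symmetrically $\bar V_1^{-1}\bar V_2=D_1^{-1}V_1^{-1}V_2 D_2=(D_2^{-1}V_2^{-1}V_1 D_1)^{-1}$. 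Therefore the cycle ratio for the rescaled graph is $\ln\kappa(D_2^{-1}V_2^{-1}V_1 D_1)/(-\ln\rho_1\rho_2)$.

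Finally, I would observe that Theorem~\ref{th:nondefective_dwell} applied to \emph{any} admissible choice of eigenvector matrices yields a valid sufficient stability condition, so the best estimate is obtained by taking the infimum over all $D_1,D_2\in\mathcal{D}$. Setting $D_L=D_2^{-1}$ and $D_R=D_1$ (both range over $\mathcal{D}$ as $D_1,D_2$ do) rewrites this infimum as $\min_{D_L,D_R\in\mathcal{D}}\kappa(D_L V_2^{-1}V_1 D_R)$, yielding the claimed bound.

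The proof is essentially a specialization plus a reparametrization, so no genuine obstacle is expected; the only point that merits care is verifying that both factors of the cycle transform consistently under rescaling so that the condition number really is $\kappa(D_L V_2^{-1}V_1 D_R)$ rather than some asymmetric product. A minor formal issue is that the infimum over $\mathcal{D}$ need not be attained, but since the stability condition is a strict inequality, replacing the minimum by an infimum (or approximating the optimum via the scaling algorithms of \cite{Morari_braatz,Liu_13}) is harmless.
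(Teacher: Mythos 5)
Your proposal is correct and follows essentially the same route as the paper: specializing the maximum cycle ratio to the unique bimodal cycle to obtain Eq.~(\ref{eq:kappa}), then using the diagonal-scaling freedom of eigenvector matrices ($\bar V_2^{-1}\bar V_1 = D_2^{-1}V_2^{-1}V_1 D_1$) to minimize the condition number, exactly as in the text preceding Corollary~\ref{cor:bimodal1}. Your added remarks --- that $\omega^-$ is unaffected by the rescaling and that the infimum need not be attained but is harmless under a strict inequality --- are correct refinements of the same argument.
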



One can use any other $p$-norm in Inequality (\ref{eq:same_as_jordan}). The non-defectiveness condition implies that $J$ is diagonal, hence $\|J\|_p=\rho(J)$, where $\rho $ denotes the spectral radius and $\|\cdot\|_p$ denotes the $p$-norm. Using the fact that $p$-norm is sub-multiplicative, one can similarly obtain the condition
\begin{equation}
\label{eq:bimodal_tau_remark}
\tau > \frac{\ln \kappa_p(P_2^{-1}P_1^{})}{-\ln(\rho_1\rho_2 )},
\end{equation} 
where, $ \kappa_p(A)=\|A\|_p\|A^{-1}\|_p $. There is an analytical method \cite{Bauer_63} for minimizing the condition number for norms $\|\cdot\|_1$ and $\|\cdot\|_{\infty}$ by scaling rows and columns. According to this method, for $p=1,\infty$,
\begin{equation}
\min_{D_L, D_R \in \mathcal D} \kappa_{p}(D_LAD_R)= \rho(|A||A^{-1}|)
\end{equation} 
where  $|A|$ denotes the matrix whose elements are absolute value of the corresponding elements of $A$, and $\rho$ denotes spectral radius.
Hence, we have the following result:

\begin{corollary}
 \label{cor:bimodal2}
 The switched linear system (\ref{eq:nondefective_form}) with $M=2$ is asymptotically stable if
 \begin{equation}
 \label{eq:bimodal2}
 \tau > \frac{\ln(\rho(|P_2^{-1}P_1^{}||P_1^{-1}P_2^{}|))}{-\ln(\rho_1 \rho_2 )}.
 \end{equation}
 \end{corollary}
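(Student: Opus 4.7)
The plan is to string together three ingredients that the paper has already put in place: the $p$-norm version of the bimodal dwell-time bound in (\ref{eq:bimodal_tau_remark}), the eigenvector-rescaling trick used in Corollary~\ref{cor:bimodal1}, and Bauer's analytical formula for the minimum condition number under diagonal scaling when $p=1$ or $p=\infty$. No new dynamics argument is needed; the work is to verify that these pieces compose correctly.

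First I would redo the derivation that led to (\ref{eq:bimodal_tau_remark}) but for $p=1$ (or equivalently $p=\infty$). Starting from (\ref{eq:same_as_jordan}) with the $p$-norm in place of the spectral norm, submultiplicativity still holds and $\|D_i\|_p = \rho_i$ for the diagonal matrix $D_i$, so Theorem~\ref{th:nondefective_dwell} applies verbatim with the edge weights $\omega_{ij}^+ = \ln\|V_j^{-1}V_i\|_p$ and $\omega_{ij}^- = -\ln\rho_i$. In the bimodal case the only cycle gives, exactly as in (\ref{eq:kappa}), $\nu(\mathcal G) = \ln\kappa_p(V_2^{-1}V_1)/(-\ln(\rho_1\rho_2))$.

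Next I would invoke the rescaling freedom of the eigenvectors. As observed in the paragraph leading to Corollary~\ref{cor:bimodal1}, replacing $V_1, V_2$ by $V_1 D_R, V_2 D_L^{-1}$ with $D_L, D_R \in \mathcal D$ still yields valid eigenvector matrices and changes $V_2^{-1}V_1$ into $D_L V_2^{-1} V_1 D_R$. Since Theorem~\ref{th:nondefective_dwell} is valid for any choice of eigenvector matrices, the stability condition can be tightened to
\begin{equation*}
\tau > \frac{\ln\bigl(\min_{D_L, D_R \in \mathcal D} \kappa_p(D_L V_2^{-1} V_1 D_R)\bigr)}{-\ln(\rho_1 \rho_2)}.
\end{equation*}

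Finally I would plug in Bauer's identity \cite{Bauer_63}, already quoted just above the statement, which for $p\in\{1,\infty\}$ gives $\min_{D_L,D_R\in\mathcal D} \kappa_p(D_L A D_R) = \rho(|A|\,|A^{-1}|)$. Applying this with $A = V_2^{-1}V_1$ and using $A^{-1} = V_1^{-1}V_2$ produces the claimed bound $\tau > \ln(\rho(|V_2^{-1}V_1|\,|V_1^{-1}V_2|))/(-\ln(\rho_1\rho_2))$, which is exactly (\ref{eq:bimodal2}). The only substantive point to check carefully is that all three earlier results really remain valid when the spectral norm is swapped for the $1$- or $\infty$-norm; the hard part, if any, is keeping track of norm choices consistently through the chain (\ref{eq:same_as_jordan})--(\ref{eq:alpha_function}) and noticing that Bauer's formula forces the choice $p\in\{1,\infty\}$ rather than the $p=2$ used in the earlier bimodal discussion.
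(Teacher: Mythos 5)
Your proposal is correct and follows essentially the same route as the paper: the paper's justification for Corollary~\ref{cor:bimodal2} is precisely the discussion preceding it, namely replacing the spectral norm by a $p$-norm in (\ref{eq:same_as_jordan}) to get (\ref{eq:bimodal_tau_remark}), exploiting the diagonal-scaling freedom of eigenvectors, and then applying Bauer's formula for $p\in\{1,\infty\}$. You have merely made the implicit chain explicit, which is fine.
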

 


\section{Average Dwell Time Computation}
\label{sec:averagedwell}  
In this section, the average dwell time problem is considered, namely finding the smallest possible value $\tau$ for which the switched system (\ref{eq:basic_form}) is asymptotically stable for the average dwell time set (\ref{eq:avedwellset}). As a special case, the non-defective bimodal case is discussed in Subsection~\ref{sub:bimodal_average}. 

\begin{theorem}
 \label{th:nondefective_ave}
 Let $\{A_i\}_{i=1,\dots,M}$ be a family of non-defective Schur stable matrices and {let} $\mathcal G$ be a switching graph. 
 Then the switched linear system given by
\begin{equation}
x(t+1)=A_{\sigma(t)}x(t), \quad
\sigma \in \sdaverage[\tau,N_0], \quad t\in N
\end{equation}
is asymptotically stable for all $N_0$ if
 \begin{equation}
 \label{eq:max_ratio_bound}
 \tau >\frac{ \mu (\mathcal G)}{-\ln{\rho_{\max}}},
 \end{equation}
 where $\rho_{\max}=\max_i{\rho_i}$ and the maximum cycle mean $\mu$ is found using the weights given in (\ref{eq:omega_plus}) and (\ref{eq:omega_minus}) {with parameters} $\epsilon_i$ satisfying (\ref{eq:3}).
 \end{theorem}
 \begin{proof}
 For $t \in \{t_n, \dots, t_{n+1}-1\}$ we have the Inequality (\ref{eq:average_git}), which can be written as  
 \begin{equation}
 \|x(t)\|\leq\gamma e^{\alpha(n)+t\ln\rho_{\max}}\|x(0)\|,
 \end{equation} 
 where $\alpha(n)=\sum_{k=1}^n\ln\|P_{\sigma_{k+1}}^{-1}P_{\sigma_k}\|$. Consider the walk associated to the switching signal $\sigma$
 \begin{equation}
 \mathcal W_n := (\sigma_1, \sigma_2), (\sigma_2, \sigma_3),\dots,(\sigma_n, \sigma_{n+1}) 
 \end{equation}
 it is seen that the $\omega^+$-weight of the walk $\mathcal W_n $ is equal to $\alpha(n)$. Since a walk can be decomposed into the union of some cycles and a path of length less than $M$, where $M$ is the number of nodes, 
 \begin{equation}
 \alpha(n)=\alpha_*(n)+\alpha_2(n)+\alpha_3(n)+\dots+\alpha_M(n)
 \end{equation}
 where $\alpha_*(n)$ is the $\omega^+$-weight of the path of  length less than $M$, say $M_*$, and $\alpha_k(n)$ is the sum of the $\omega^+$-weights of all cycles of length $k$. Defining $\bar{\gamma}=\gamma e^{\max_{\mathcal W}\omega^+(\mathcal W)}$ where $\mathcal W$ {varies} over all possible paths, we get
 \begin{equation}
 \label{eq:alpha_ineq}
 \|x(t)\|\leq\bar{\gamma}e^{\alpha_2(n)+\alpha_3(n)+\dots+\alpha_M(n)+t\ln\rho_{\max} }\|x(0)\|.
 \end{equation} 
 Consider the maximum cycle mean of the switching graph $\mathcal G$, namely $\mu (\mathcal G)=\max_{C \in \mathcal C}\frac{\omega^+(C)}{|C|}$, where $|C|$ denotes the length of the cycle and $\mathcal C$ is the set of all cycles in $\mathcal G$. Then, it is obtained that $\omega^+(C)\leq |C|\mu (\mathcal G)$. Since $\alpha_2(n),\dots, \alpha_M(n)$ are cycle weights, we get
 \begin{eqnarray}
 \alpha_2(n)+\dots+\alpha_M (n)\leq (N_\sigma(t)-M_*)\mu (\mathcal G) \\
 \leq (N_0-M_*)\mu (\mathcal G)+\frac{t\cdot \mu (\mathcal G)}{\tau}.
 \end{eqnarray}
 Substituting this into (\ref{eq:alpha_ineq}) and defining $\bar{\bar{\gamma}}=\bar{\gamma}e^{N_0\mu (\mathcal G)}$, we obtain
 \begin{equation}
 \|x(t)\|\leq\bar{\bar{\gamma}} e^{\left(\frac{\mu(\mathcal G)}{\tau}+\ln\rho_{\max} \right)t}\|x(0)\|.
 \end{equation} 
 Since  $\frac{\mu(\mathcal G)}{\tau}+\ln{\rho_{\max}} < 0$ by assumption, we conclude that $\|x(t)\| \rightarrow 0$ 
 \end{proof}

 \begin{remark}
{  If all subsystem matrices are non-defective, then Theorem \ref{th:nondefective_ave} reduces to the discrete-time version of Theorem~2 in \cite{karabacak_scl_13}.}
 \end{remark}

%
%
%
 \subsection{Bimodal case}
 \label{sub:bimodal_average}
  {Similarly} to the minimum dwell time case, the average dwell time method can be improved for non-defective bimodal switched systems. As there is only one cycle in a bimodal system, $\mu (\mathcal G)= \frac{\omega^+(C)}{2}$. Then, the average dwell time {satisfies} $\tau>\frac{ \mu (\mathcal G)}{-\ln{\rho_{\max}}}=\frac{ \omega^+(C)}{-2\ln{\rho_{\max}}}=\frac{\ln \kappa_p(P_2^{-1}P_1^{})}{-2\ln{\rho_{\max}}}$. Hence, the method in Subsection \ref{sub:bimodal_dwell} can be applied to the computation of the average dwell time.

\section{Illustrative Examples}
We apply the obtained minimum dwell time computation method to two illustrative examples and compare the results with two different methods in {the} literature: The method  {given by Morse} in \cite{Morse_itac_96}, which finds a minimum dwell time guaranteeing each subsystem to be contractive, and the method {given by Geromel} \& Colaneri  in \cite{geromel_colaneri_06}, which uses linear matrix inequalities based on a multiple Lyapunov function technique. We skip the comparison of the obtained average dwell time computation {with} other methods in {the}  literature since {they} either require a specified convergence rate as in \cite{Zhang_09} or they refer to in mode-dependent form as in \cite{Zhang_13}, namely for each subsystem a certain average dwell time condition is imposed. 
\begin{example}
\label{ex1}
Consider a switched system consisting of four linear subsystems whose matrices are given by
\begin{equation}
 A_k=(U^{-1})^{k}\cdot A\cdot U^{k},\ \ k=0,\dots,3
 \label{eq:example}
\end{equation}
Here,

$$
A=\left(\begin{smallmatrix}
-0.2&1&0\\
-1&1.4&0\\
0&0&-0.4
\end{smallmatrix}\right)$$ 
and
$$U=\left(\begin{smallmatrix}
1.2&0&0\\
0&\cos(\frac{\pi}{3})&\sin(\frac{\pi}{3})\\
0&-\sin(\frac{\pi}{3})&\cos(\frac{\pi}{3})
\end{smallmatrix}\right).
$$
Assume that the switchings respect one of the switching graphs: fully connected $\mathcal G_1$ (Fig.~\ref{fig:fully}), one-sided ring $\mathcal G_2$ (Fig.~\ref{fig:onesided}) and two-sided ring $\mathcal G3$ (Fig.~\ref{fig:twosided}).
\end{example}

\begin{figure}
\centering
\includegraphics[width=0.4\textwidth]{./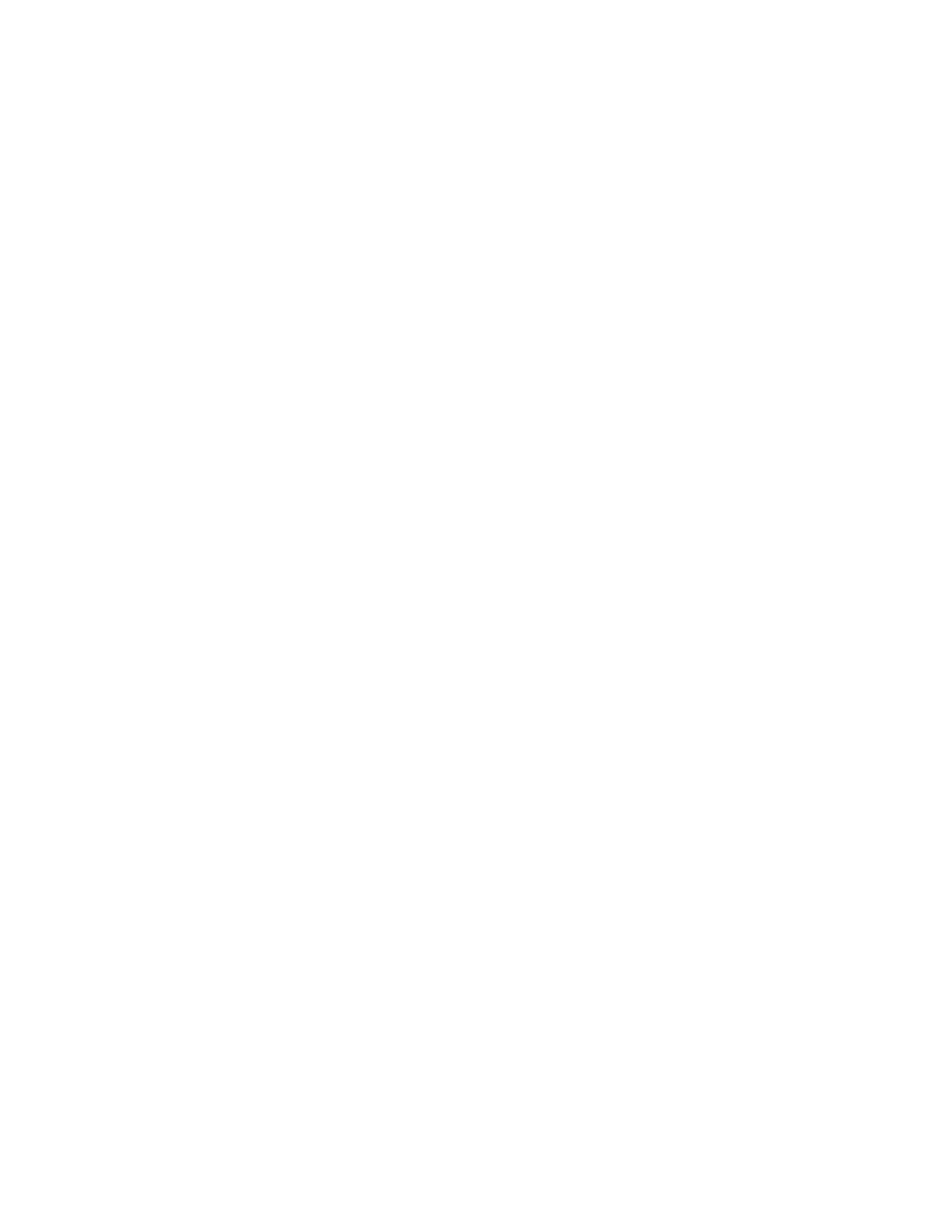}
\caption{The one-sided ring switching graph ($\mathcal G_2$) considered in Example~\ref{ex1}. On each edge, $\omega^+$ and $\omega^-$ are shown, respectively.}
\label{fig:onesided}
\end{figure}

For different switching graphs, minimum dwell time values are computed using  Theorem~\ref{th:nondefective_dwell} in Table~\ref{tab:ex1}. It can be seen that the results are better than the results obtained by the method given by Morse in \cite{Morse_itac_96}. Comparing the results with the method given by Geromel\&Colaneri in \cite{geromel_colaneri_06}, it can be seen that only for the switching graph $\mathcal G_3$, Theorem~\ref{th:nondefective_dwell} gives a worse result.

\begin{figure}
\centering
\includegraphics[width=0.4\textwidth]{./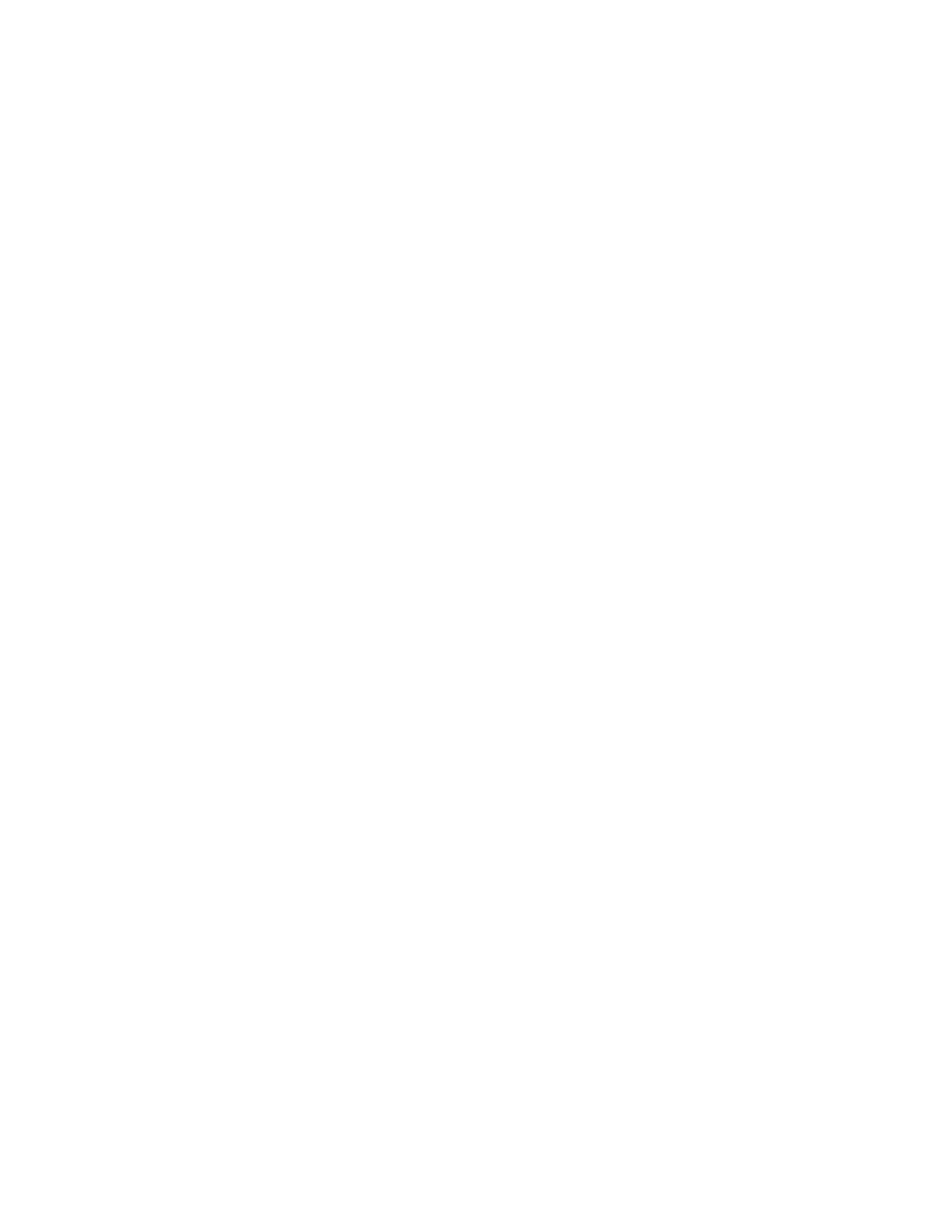}
\caption{The two-sided ring switching graph ($\mathcal G_3$) considered in Example~\ref{ex1}. On each edge, $\omega^+$ and $\omega^-$ are shown, respectively.}
\label{fig:twosided}
\end{figure}

Let us consider a bimodal system for which we can use Corollary~\ref{cor:bimodal1} and the  two-sided equilibration method in \cite{Liu_13} to compute a better value for the minimum dwell time than the one obtained by Theorem~\ref{th:nondefective_dwell}.  
\begin{example}
\label{example2}
Let $A_1$ and $A_2$ given below be the subsystem matrices of a switched system:
$$A_1=\left(\begin{smallmatrix}
-0.38&0.2&0.1\\
-0.16&0.72&0.16\\
-0.24&0.24&0.8
\end{smallmatrix}\right)
$$
$$A_2=\left(\begin{smallmatrix}
-0.8&-0.07&0.04\\
0.1&-1&0.05\\
-0.1&-0.06&-0.34
\end{smallmatrix}\right)
$$ 
$\tau$ is calculated as $7$ using the condition given in Theorem \ref{th:nondefective_dwell}. However, applying  Corollary \ref{cor:bimodal1}, $\tau$ is calculated as $1$ which is equal to the value found by linear matrix inequalities method \cite{geromel_colaneri_06}. Here, we use the two sided equilibration method in \cite{Liu_13} which is based on the idea that the condition number can be reduced by making norms of rows as well as norms of columns equal. The row scaling matrix $D_L$ and  the column scaling matrix $D_R$ are calculated as below:
$$D_R=\left(\begin{smallmatrix}
0.8528&0&0\\
0&0.7178&0\\
0&0&1.9789
\end{smallmatrix}\right)
$$ 

$$D_L=\left(\begin{smallmatrix}
1.8805&0&0\\
0&0.4700&0\\
0&0&1.8803
\end{smallmatrix}\right).$$ 
\end{example}

\section{Conclusion}
\label{sec:conc}
A method for the computation of the minimum dwell time that guarantees the asymptotic stability of a switched system has been presented. The method is applicable to systems whose switchings are governed by a digraph. The graph-theoretical nature of the method allows fast computation of an estimate of the minimum dwell time using the maximum cycle ratio algorithms in graph theory. We note that there are many problems that can be considered for the switched systems whose switchings are governed by digraphs. The role that the nature of the switching digraph plays on the dynamics of the switched system should be considered further.

We have shown that the average dwell time can be computed using the minimum cycle mean of the switching graph. This approach can be improved in two different ways: Firstly, one can introduce the mode-dependent average dwell time as in \cite{Zhang_13}, and try to find sufficient conditions on the mode-dependent average dwell times for a given switching graph. Secondly, one can consider a preassumed convergence rate as in \cite{Zhang_09} to calculate the average dwell time of a given switching graph in a less conservative method.

\section*{Acknowledgement}

This research was supported by T\"UB\.ITAK (The Scientific and Technological Research Council of Turkey), project no: 115E475.

\begin{table}[ht]
\centering
\caption{The minimum dwell time values computed for Example \ref{ex1}.}
	\begin{tabular}{|c|c|c|c|}
		\hline
		&\multicolumn{3}{|c|}{\textbf{\underline{Minimum dwell time}}}  \\
		{\textbf{Switching}}&& &Geromel\\
		{\textbf{Graph}}&Theorem~\ref{th:nondefective_dwell}&Morse\cite{Morse_itac_96}&Colaneri\cite{geromel_colaneri_06}\\
		&$\tau$&$\tau$&$\tau$\\
		\hline
		$\mathcal G_1$& 7  &8 &7 \\
		$\mathcal G_2$& 7 &8 &7 \\
		$\mathcal G_3$& 5 &8 &2 \\
		\hline
	\end{tabular}
\label{tab:ex1}
\end{table}

\bibliographystyle{vancouver} 
\bibliography{cycleweight}
\end{document}